\newtheorem{theorem}{Theorem}[section]
\newtheorem{lemma}{Lemma}[section]
\newtheorem{remark}{Remark}[section]
\title{\LARGE \bf
Maximum Correntropy Ensemble Kalman Filter
}
\author{Yangtianze Tao, Jiayi Kang and Stephen Shing-Toung Yau
\thanks{This work is supported by National Natural Science Foundation of China (NSFC) grant (11961141005) and Tsinghua University Education Foundation fund (042202008). \emph{Corresponding author: Stephen Shing-Toung Yau.}}
\thanks{Yangtianze Tao and Jiayi Kang are with the Department of Mathematical Sciences, Tsinghua University, Beijing 100084, China. 
{Email: \tt\small tytz19@mails.tsinghua.edu.cn, \tt\small kangjy19@mails.tsinghua.edu.cn}}
\thanks{Stephen Shing-Toung Yau is with the Department of Mathematical Sciences, Tsinghua University, Beijing 100084, China, and with the Yanqi Lake Beijing Institute of Mathematical Sciences and Applications, Huairou district, Beijing 101400, China.
{Email: \tt\small yau@uic.edu}}
}
\begin{document}

\maketitle
\thispagestyle{empty}
\pagestyle{empty}

%%%%%%%%%%%%%%%%%%%%%%%%%%%%%%%%%%%%%%%%%%%%%%%%%%%%%%%%%%%%%%%%%%%%%%%%%%%%%%%%
\begin{abstract}
In this article, a robust ensemble Kalman filter (EnKF) called MC-EnKF is proposed for nonlinear state-space model to deal with filtering problems with non-Gaussian observation noises. Our MC-EnKF is derived based on maximum correntropy criterion (MCC) with some technical approximations. Moreover, we propose an effective adaptive strategy for kernel bandwidth selection.
Besides, the relations between the common EnKF and MC-EnKF are given, i.e., MC-EnKF will converge to the common EnKF when the kernel bandwidth tends to infinity. This justification provides a complementary understanding of the kernel bandwidth selection for MC-EnKF. In experiments, non-Gaussian observation noises significantly reduce the performance of the common EnKF for both linear and nonlinear systems, whereas our proposed MC-EnKF with a suitable kernel bandwidth maintains its good performance at only a marginal increase in computing cost, demonstrating its robustness and efficiency to non-Gaussian observation noises.
\end{abstract}

%%%%%%%%%%%%%%%%%%%%%%%%%%%%%%%%%%%%%%%%%%%%%%%%%%%%%%%%%%%%%%%%%%%%%%%%%%%%%%%%
\section{Introduction}
Filtering \cite{sarkka2013bayesian} is the fundamental issue for the state estimation of the state-space model \cite{hamilton1994state}. It has a large number of applications in many fields, such as robot vision \cite{chen2011kalman} and data assimilation \cite{law2015data}. For the linear state-space model with Gaussian noise, the optimal filtering solution can be computed analytically, which is the well-known Kalman filter (KF) \cite{kalman_first}. Since then, a large number of filtering algorithms for nonlinear state-space model with Gaussian noise have been proposed. The representatives among them are extended Kalman filter (EKF) \cite{hoshiya1984structural}, unscented Kalman filter (UKF) \cite{julier1997new}, cubature Kalman filter (CKF) \cite{arasaratnam2009cubature} and ensemble Kalman filter (EnKF) \cite{evensen2003ensemble}.

In the case of Gaussian noise, the KF and its expansions work well. However, the noise frequently does not fit the Gaussian distribution in real-world application circumstances. For instance, in many practical settings of target tracking \cite{lu2022measurement} and power systems \cite{wang2019adaptive}, impulsive interferences and observation outliers are frequent.
These disturbances are often modelled by some heavy-tailed impulsive noises (such as
some mixed-Gaussian distributions). The main
reason for this problem is that the KF is based on the well-known
minimum mean square error (MMSE) criterion \cite{jazwinski2007stochastic}, which is sensitive to large outliers and deteriorates the KF's robustness in non-Gaussian noise \cite{wu2015kernel}. Numerous studies have attempted to address the filtering problems with non-Gaussian noise, such as filters based on information theoretic quantities \cite{principe2010information}, the Huber-based KFs \cite{stojanovic2016robust} and robust Student’s t filter \cite{huang2017novel}. Besides these filters, a local similarity measure called correntropy \cite{liu2007correntropy} has been used to develop new robust filters \cite{chen2017maximum, wang2017maximum, wang2019iterated, tao2023outlier}. 
These filters are derived by maximum correntropy criterion (MCC) and called MCC-based filters, which can achieve great performance in the presence of heavy-tailed non-Gaussian noises since correntropy is insensitive to large outliers.

In this article, we formulate the estimation problem under MCC framework. Under this framework, the goal of this article is to develop a robust EnKF called the maximum correntropy EnKF (MC-EnKF) based on the MCC, to address nonlinear filtering problems when observations contain large outliers. With the help of suitable approximation techniques, we derive the recursion of ensembles for MC-EnKF based on MCC cost function. It is necessary to note that MC-EnKF employs the empirical prediction covariance calculated by ensembles. This procedure is consistent with the common EnKF, which allows us to prove that MC-EnKF will
converge to the common EnKF when the kernel bandwidth
tends to infinity. This justification provides a complementary
knowledge of understanding the kernel bandwidth selection
for MC-EnKF. Although MC-EnKF aims to handle non-Gaussian noise cases, it also has the flexibility to handle
Gaussian noise cases with a large enough kernel bandwidth
since EnKF performs well in this scenario. The contributions of this article are listed as follows.
\begin{itemize}
    \item We introduce the idea of the cost function to the field of EnKF, which inspires us to derive a robust EnKF called MC-EnKF based on the MCC cost function. Moreover, we propose an effective adaptive strategy for kernel bandwidth selection. Besides, we provide a complementary understanding of the kernel bandwidth selection theoretically, i.e., MC-EnKF will converge to the common EnKF when the kernel bandwidth is large enough. This justification gives the flexibility of MC-EnKF to handle cases involving both Gaussian and non-Gaussian noise.
    
    \item Our theoretical result is supported by numerical experiments on several filtering benchmarks, i.e., our proposed MC-EnKF will perform like the common EnKF when the kernel bandwidth is large enough. Furthermore, experiments demonstrate that our proposed adaptive strategy for kernel bandwidth selection is effective. And with the appropriate kernel bandwidth, MC-EnKF can outperform EnKF with only a minor increase in computing cost when the underlying observation system is hampered by some heavy-tailed non-Gaussian noises.
\end{itemize}

The remainder of this article is organized as follows. In section \ref{sec:background}, we present the concept of MCC, and briefly review the nonlinear filtering problems and EnKF algorithm. In section \ref{sec:method}, we derive our MC-EnKF algorithm and propose an effective adaptive strategy for kernel bandwidth selection with necessary discussions. The experiments and discussions are presented in Section \ref{sec:experiments}. The conclusions are drawn in Section \ref{sec:conclusion}.

\subsection{Notations}
Throughout the article, we use boldface lower-case letters
for vectors and boldface upper-case letters for matrices. The transpose and mathematical expectation are denoted by $\{\cdot\}^{\top}$ and $\mathbb{E}\left[\cdot\right]$, respectively. The Gaussian distribution
with mean $\mu$ and covariance $\bm{\Sigma}$ is denoted by $\mathcal{N}(\mu, \bm{\Sigma})$. $\mathbb{R}^{n}$ and $\mathbb{R}^{m\times n}$ denote, respectively, the $n$-dimensional
Euclidean space and the set of all $n\times m$ real matrices. Especially, $\mathbf{I}_{n}$ denotes the n-dimensional identity matrix. weighted $l^{2}$ norm $\|\mathbf{x}\|_{\mathbf{A}}^{2} := \mathbf{x}^{\top}\mathbf{A}^{-1}\mathbf{x}$ with $\mathbf{x}\in\mathbb{R}^{n}$ and non-singular matrix $\mathbf{A}\in\mathbb{R}^{n\times n}$.
   
\section{Background}\label{sec:background}
In this section, we shall first introduce the concept of correntropy \cite{liu2007correntropy}, which is a local similarity measure between two random vectors. Then we shall formulate the estimation problems based on MCC and MMSE estimate \cite{jazwinski2007stochastic}. At last, we shall briefly review the framework of nonlinear filtering problems and introduce the EnKF algorithm.

\subsection{Maximum Correntropy Criterion and Minimum Mear Square Error}

\subsubsection{Correntropy}
For given two random vectors $\mathbf{x}$ and $\mathbf{y}$, the correntropy between $\mathbf{x}$ and $\mathbf{y}$ is denoted by $\mathcal{V}(\mathbf{x}, \mathbf{y})$, which is defined as follows:
\begin{equation}
\mathcal{V}(\mathbf{x}, \mathbf{y})=\mathbb{E}
\left[\mathcal{G}_\sigma(\|\mathbf{x} - \mathbf{y}\|_{\mathbf{A}_{1}})\right],
\end{equation}
where $\mathbf{A}_{1}$ is a given non-singular matrix with suitable dimension, $\mathcal{G}_\sigma$ is the Gaussian Kernel given by
$\mathcal{G}_\sigma(e)=\exp \left(-\frac{e^2}{2 \sigma^2}\right)$, and $\sigma>0$ stands for the kernel bandwidth. Here we shall recall the important property for $\mathcal{V}(\mathbf{x}, \mathbf{y})$, more details can be found in \cite{liu2007correntropy}. By taking the Taylor series expansion of the Gaussian Kernel, we have
\begin{equation}
\mathcal{V}(\mathbf{x}, \mathbf{y})=\sum_{n=0}^{\infty} \frac{(-1)^n}{2^n \sigma^{2 n} n !} \mathbb{E}\left[\|\mathbf{x}-\mathbf{y}\|_{\mathbf{A}_{1}}^{2 n}\right].
\end{equation}
The correntropy can be seen as the weighted sum of all even order moments of the error vectors $\mathbf{x}-\mathbf{y}$. The parameter to weight the second order and higher order moments appears to be the kernel bandwidth. The second order moment will predominate in the correntropy when the kernel bandwidth is particularly large compared to the error vectors range.

\subsubsection{Estimation Problems Formulation}
Correntropy can be used as the optimality criterion for estimation problems. Suppose our goal is to learn a parameter $\theta$ for a given estimator $\mathbf{x}(\theta)$, and let $\mathbf{y}$ denote the desired output. Then the MCC-based estimation problem can be formulated as solving the following optimzation problem:
\begin{equation}\label{eqn:mcc-cost}
    \hat{\bm{\theta}}_{1} = \arg\max_{\bm{\theta}\in\Omega_{1}} \ \mathcal{V}(\mathbf{x}(\theta), \mathbf{y}),
\end{equation}
where $\hat{\bm{\theta}}_{1}$ denotes its optimal solution and $\Omega_{1}$ denotes its feasible set for parameters. It is necessary to compare MCC with the conventional MMSE criterion. Let $\Omega_{2}$ denotes its feasible set for parameters with $\hat{\bm{\theta}}_{2}$ as optimal solution. Then for the given non-singular matrix $\mathbf{A}_{2}$ with suitable dimension, the MMSE-based estimation problem is formulated as follows:
\begin{equation}\label{eqn:mmse-cost}
\begin{aligned}
     \hat{\bm{\theta}}_{2} & = \arg\max_{\bm{\theta}\in\Omega_{2}} \ -\|\mathbf{x}(\theta) - \mathbf{y}\|_{\mathbf{A}_{2}} \\
     & = \arg\min_{\bm{\theta}\in\Omega_{2}} \ \|\mathbf{x}(\theta) - \mathbf{y}\|_{\mathbf{A}_{2}}.
\end{aligned}
\end{equation}
\subsection{Nonlinear Filtering Problems}
In this article, we consider the nonlinear state-space model given by the following state and observation equations:
\begin{equation}\label{eqn:state-space-model}
    \begin{aligned}
        \mathbf{x}_{k} & = \mathbf{f}_{k}(\mathbf{x}_{k-1}) + \mathbf{w}_{k} \quad  \\
        \mathbf{y}_{k} & = \mathbf{h}_{k}(\mathbf{x}_{k}) + \mathbf{v}_{k},
    \end{aligned}
\end{equation}
where state $\mathbf{x}_{k}\in\mathbb{R}^{n}$, and observation $\mathbf{y}_{k}\in \mathbb{R}^{m}$. $\mathbf{f}_{k}$ and $\mathbf{h}_{k}$ are nonlinear functions called state equation and observation equation, respectively. State noise $\mathbf{w}_{k}$ and observation noise $\mathbf{v}_{k}$ are zero means with nominal covariance $\mathbf{Q}_{k}\in\mathbb{R}^{n\times n}$ and $\mathbf{R}_{k}\in\mathbb{R}^{m\times m}$, respectively. Let $\mathbf{y}_{1:k}$ denote the $\sigma$-algebra generated by noisy observations $\{\mathbf{y}_{1}, \dots, \mathbf{y}_{k}\}$. The filtering problem refers to estimating the posterior distribution $p(\mathbf{x}_{k}\mid\mathbf{y}_{1:k})$, which is called filtering distribution. 

 In the follow-up, we focus on the scenario in which the observation noises are not Gaussian, i.e., they are induced by unknown large outliers. Hence, the real distribution of observation noise is unknown to us in fact. For example, we consider $(1-\epsilon)\ \mathcal{N}(0, \mathbf{R}_{k}) + \epsilon \ \mathcal{S}(0, \mathbf{S}_{k})$ , where $0<\epsilon\ll 1$ is the unknown probability, and $\mathcal{S}(0, \mathbf{S}_{k})$ is an arbitrary unknown
distribution with large covariance $\mathbf{S}_{k}$. Here $\mathbf{R}_{k}$ is known to us, so it is called the nominal covariance matrix. Additionally, in what follows, we shall develop our novel filter to deal with such noises above by utilizing the MCC as the cost function involving the nominal observation covariance $\mathbf{R}_{k}$. 

\subsection{Ensemble Kalman Filter}

The EnKF sequentially approximates the filtering distributions $p(\mathbf{x}_{k}\mid \mathbf{y}_{1:k})$ using $N$ equally weighted ensembles $\{\mathbf{x}_{k}^{(1)}, \dots, \mathbf{x}_{k}^{(N)}\}$. At prediction steps, each ensemble $\mathbf{x}_{k}^{(i)}$
is propagated using the state equation, while at update steps a Kalman-type update is performed for each ensemble:

\begin{equation}\label{eqn:enkf-pred}
    \mathbf{x}_{k\mid k-1}^{(i)}  = \mathbf{f}_{k}(\mathbf{x}_{k-1\mid k-1}^{(i)}) + \mathbf{w}_{k}^{(i)},
\end{equation}
and 
\begin{equation}\label{eqn:enkf-update}
     \mathbf{x}_{k\mid k}^{(i)} = \mathbf{x}_{k\mid k-1}^{(i)} + \hat{\mathbf{K}}_{k} \left(\mathbf{y}_{k} + \mathbf{v}_{k}^{(i)} - \mathbf{h}_{k}(\mathbf{x}_{k\mid k-1}^{(i)}) \right),
\end{equation}
where $\mathbf{w}_{k}^{(i)}\sim\mathcal{N}(0, \mathbf{Q}_{k})$, $\mathbf{v}_{k}^{(i)}\sim\mathcal{N}(0, \mathbf{R}_{k})$ and the Kalman gain 
\begin{equation}\label{eqn:enkf-gain}
\hat{\mathbf{K}}_{k}=\hat{\mathbf{C}}_{k}\mathbf{H}_{k}^{\top}(\mathbf{H}_{k}\hat{\mathbf{C}}_{k}\mathbf{H}_{k}^{\top}+\mathbf{R}_{k})^{-1},
\end{equation}
is defined using the empirical prediction covariance $\hat{\mathbf{C}}_{k}$
of prediction ensembles $\{\mathbf{x}_{k\mid k-1}^{(1)}, \dots, \mathbf{x}_{k\mid k-1}^{(N)}\}$, namely
\begin{equation}\label{eqn:enkf-cov}
    \hat{\mathbf{C}}_{k} = \frac{1}{N-1}\sum_{i=1}^{N}(\mathbf{x}_{k\mid k-1}^{(i)} - \hat{\mathbf{m}}_{k})(\mathbf{x}_{k\mid k-1}^{(i)} - \hat{\mathbf{m}}_{k})^{\top}, 
\end{equation}
with 
\begin{equation}\label{eqn:enkf-mean}
    \hat{\mathbf{m}}_{k} = \frac{1}{N}\sum_{i=1}^{N}\mathbf{x}_{k\mid k-1}^{(i)},
\end{equation}
and 
\begin{equation}\label{eqn:approx-H}
\mathbf{H}_{k}=\frac{\partial\mathbf{h}_{k}}{\partial\mathbf{x}_{k}}\mid_{\mathbf{x}_{k}=\hat{\mathbf{m}}_{k}}.    
\end{equation}
\begin{comment}
Here we briefly summarise the specific steps of EnKF in Algorithm \ref{alg:enkf}.
\begin{algorithm}[htbp]
\caption{EnKF}
\label{alg:enkf}
\begin{algorithmic}[1]
    \STATE \textbf{Intitialization.} Start with initial filtering ensembles $\mathbf{x}_{0\mid 0}^{(1)}, \dots, \mathbf{x}_{0\mid 0}^{(N)}$. Then, at each time $k=1, 2, \dots$, given filtering ensembles $\mathbf{x}_{k-1\mid k-1}^{(1)}, \dots, \mathbf{x}_{k-1\mid k-1}^{(N)}$, the EnKF carries out the following two steps for $i=1, 2, \dots, N$:
	\STATE \textbf{Prediction Step: } Draw $\mathbf{w}_{k}^{(i)}\sim\mathcal{N}(0, \mathbf{Q}_{k})$ and calculate $\mathbf{x}_{k\mid k-1}^{(i)}$ via Eq.\eqref{eqn:enkf-pred} \;
	\STATE  \textbf{Update Step: }  Draw $\mathbf{v}_{k}^{(i)}\sim\mathcal{N}(0, \mathbf{R}_{k})$ and calculate $\mathbf{x}_{k\mid k}^{(i)}$ via Eq.\eqref{eqn:enkf-update}.
\end{algorithmic}
\end{algorithm}
\end{comment}
\begin{remark}\label{remark:enkf-prediction-approx}
Empirical prediction mean in \eqref{eqn:enkf-mean} and empirical prediction covariance in \eqref{eqn:enkf-cov} provide a Gaussian approximation to the prediction distribution distribution, i.e.,
\begin{equation}
    p(\mathbf{x}_{k}\mid \mathbf{y}_{1:k-1})\approx\mathcal{N}(\hat{\mathbf{m}}_{k}, \hat{\mathbf{C}}_{k}).
\end{equation}
\end{remark}

\section{Proposed Algorithm}\label{sec:method}

In this section, we shall present the derivation of MC-EnKF. Then we shall give the discussions on the adaptive strategy for kernel bandwidth selection and the convergence of MC-EnKF with respect to kernel bandwidth.

\subsection{Derivation of the Algorithm}
Here we present how to derive MC-EnKF based on the MCC estimate \eqref{eqn:mcc-cost}. The prediction step of MC-EnKF is the same as those of EnKF, i.e., \eqref{eqn:enkf-pred}. Therefore, we shall focus on deriving its update step. Let $\hat{\mathbf{x}}_{k\mid k-1}$ and $\mathbf{P}_{k\mid k-1}$ denote the prediction mean $\mathbb{E}\left[\mathbf{x}_{k}\mid \mathbf{y}_{1:k-1}\right]$ and prediction covariance $\mathbb{E}\left[\left(\mathbf{x}_{k} - \hat{\mathbf{x}}_{k\mid k-1}\right)\left(\mathbf{x}_{k} - \hat{\mathbf{x}}_{k\mid k-1}\right)^{\top}\mid \mathbf{y}_{1:k-1}\right]$, respectively.
As shown in \cite{rauch1965maximum}, for the underlying linear system of \eqref{eqn:state-space-model}, i.e., $\mathbf{f}_{k}(\mathbf{x}_{k})=\mathbf{F}_{k}\mathbf{x}_{k}$ and $\mathbf{h}_{k}(\mathbf{x}_{k})=\mathbf{H}_{k}\mathbf{x}_{k}$, the update step of KF can be derived by using least square cost function, 
\begin{equation}\label{eqn:kf-cost}
    \mathcal{L}_{1}(\mathbf{x}_{k}) = \|\mathbf{y}_{k}-\mathbf{H}_{k}\mathbf{x}_{k}\|_{\mathbf{R}_{k}}+\|\mathbf{x}_{k}-\hat{\mathbf{x}}_{k\mid k-1}\|_{\mathbf{P}_{k\mid k-1}}.
\end{equation}
Motivated by \eqref{eqn:kf-cost}, we shall consider a new cost function based on MCC. Besides, in view of Remark \ref{remark:enkf-prediction-approx}, we shall replace $\hat{\mathbf{x}}_{k\mid k-1}$ and $\mathbf{P}_{k\mid k-1}$ in \eqref{eqn:kf-cost} by the empirical prediction mean $\hat{\mathbf{m}}_{k}$ in \eqref{eqn:enkf-mean} and the empirical prediction covariance $\hat{\mathbf{C}}_{k}$ in \eqref{eqn:enkf-cov}. Therefore, our modified cost function for ensembles $\mathbf{x}_{k\mid k}^{(i)}$ for $i=1, 2, \dots, N$ is given by
\begin{equation}\label{eqn:mc-enkf-cost}
\begin{aligned}
    \mathcal{L}_{2}(\mathbf{x}_{k}) & = \mathcal{G}_{\sigma}\left(\|\mathbf{y}_{k} - \mathbf{h}_{k}(\mathbf{x}_{k})\|_{\mathbf{R}_{k}}\right) \\
    & + \mathcal{G}_{\sigma}\left(\|\mathbf{x}_{k} - \hat{\mathbf{m}}_{k}\|_{\hat{\mathbf{C}}_{k}}\right).
\end{aligned}
\end{equation}
Then based on $\mathcal{L}_{2}(\mathbf{x}_{k})$, the update step for MC-EnKF can be obtained  solving the following optimization
problem for $i=1, 2, \dots, N$:
\begin{equation}\label{eqn:opt-form}
    \mathbf{x}_{k\mid k}^{(i)} = \arg\max_{\mathbf{x}_{k}} \ \mathcal{L}_{2}(\mathbf{x}_{k}).
\end{equation}
In what follows, the derivation contains some approximations. For the sake of obtaining our algorithm, we heuristically treat them as exact equalities. Let us denote
\begin{equation}\label{eqn:two-weights}
    \begin{aligned}
     l_{\mathbf{R}_{k}} & =\mathcal{G}_{\sigma}\left(\left\|\mathbf{y}_{k}-\mathbf{h}_{k}\left(\mathbf{x}_{k}\right)\right\|_{\mathbf{R}_{k}}\right) \\
     l_{\hat{\mathbf{C}}_{k}} & =\mathcal{G}_{\sigma}\left(\| \mathbf{x}_{k}-\hat{\mathbf{m}}_{k} \|_{\hat{\mathbf{C}}_{k}}\right).
    \end{aligned}
\end{equation}
Then recall $\mathbf{H}_{k}$ defined in \eqref{eqn:approx-H}, we consider this approximation when taking the derivative,
\begin{equation}
\begin{aligned}
\frac{\partial \mathcal{L}_{2}\left(\mathbf{x}_{k}\right)}{\partial \mathbf{x}_{k}}= & -\frac{1}{\sigma^2}\left(\frac{\partial \mathbf{h}_{k}}{\partial \mathbf{x}_{k}}\right)^{\top} l_{\mathbf{R}_{k}} \mathbf{R}_{k}^{-1}\left(\mathbf{y}_{k}-\mathbf{h}_{k}\left(\mathbf{x}_{k}\right)\right) \\
& +\frac{1}{\sigma^2} l_{\hat{\mathbf{C}}_{k}} \hat{\mathbf{C}}_{k}^{-1}\left(\mathbf{x}_{k}-\hat{\mathbf{m}}_{k}\right) \\
\approx & -\frac{1}{\sigma^2} \mathbf{H}_{k}^{\top} l_{\mathbf{R}_{k}} \mathbf{R}_{k}^{-1}\left(\mathbf{y}_{k}-\mathbf{h}_{k}\left(\mathbf{x}_{k}\right)\right) \\
& + \frac{1}{\sigma^2} l_{\hat{\mathbf{C}}_{k}} \hat{\mathbf{C}}_{k}^{-1}\left(\mathbf{x}_{k}-\hat{\mathbf{m}}_{k}\right).
\end{aligned}
\end{equation}
Now letting $\frac{\partial \mathcal{L}_{2}\left(\mathbf{x}_{k}\right)}{\partial \mathbf{x}_{k}}=0$, we have
\begin{equation}\label{eqn:one-order-condition}
\mathbf{H}_{k}^{\top} l_{\mathbf{R}_{k}} \mathbf{R}_{k}^{-1}\left(\mathbf{y}_{k}-\mathbf{h}_{k}\left(\mathbf{x}_{k}\right)\right)=l_{\hat{\mathbf{C}}_{k}} \hat{\mathbf{C}}_{k}^{-1}\left(\mathbf{x}_{k}-\hat{\mathbf{m}}_{k}\right) .
\end{equation}
Adopting the first-order Taylor series to approximate the nonlinear observation function $\mathbf{h}_{k}$ at $\mathbf{x}_{k\mid k-1}^{(i)}$, i.e., 
\begin{equation}\label{eqn:tarlor-approx-h}
\mathbf{h}_{k}\left(\mathbf{x}_{k}\right) \approx \mathbf{h}_{k}\left(\mathbf{x}_{k\mid k-1}^{(i)}\right)+\mathbf{H}_{k}\left(\mathbf{x}_{k}-\mathbf{x}_{k\mid k-1}^{(i)}\right) .
\end{equation}
Substituting \eqref{eqn:tarlor-approx-h} into \eqref{eqn:one-order-condition}, we have
\begin{equation}\label{eqn:mc-enkf-update}
\begin{aligned}
\mathbf{x}_{k}= & \mathbf{x}_{k\mid k-1}^{(i)}+\left(l_{\hat{\mathbf{C}}_{k}} \hat{\mathbf{C}}_{k}^{-1}+\mathbf{H}_{k}^{\top} l_{\mathbf{R}_{k}} \mathbf{R}_{k}^{-1} \mathbf{H}_{k}\right)^{-1} \\
& \times \mathbf{H}_{k}^{\top} l_{\mathbf{R}_{k}} \mathbf{R}_{k}^{-1}\left(\mathbf{y}_{k}-\mathbf{h}_{k}\left(\mathbf{x}_{k\mid k-1}^{(i)}\right)\right) .
\end{aligned}
\end{equation}
Then we obtain the following stochastic ensemble update rule like \eqref{eqn:enkf-update} for $\mathbf{x}_{k \mid k}^{(i)}$ with drawing $\mathbf{v}_{k}^{(i)}\sim\mathcal{N}(0, \mathbf{R}_{k})$:
\begin{equation}\label{eqn:gain-update}
\mathbf{x}_{k \mid k}^{(i)}=\mathbf{x}_{k\mid k-1}^{(i)}+\tilde{\mathbf{K}}_{k}\left(\mathbf{y}_{k} + \mathbf{v}_{k}^{(i)} -\mathbf{h}_{k}\left(\mathbf{x}_{k\mid k-1}^{(i)}\right)\right),
\end{equation}
where the new Kalman gain $\tilde{\mathbf{K}}_{k}$ is given by
\begin{equation}\label{eqn:enkf-gain-}
\begin{aligned}
 \tilde{\mathbf{K}}_{k}& =\left(l_{\hat{\mathbf{C}}_{k}}\hat{\mathbf{C}}_{k}^{-1}+\mathbf{H}_{k}^{\top} l_{\mathbf{R}_{k}} \mathbf{R}_{k}^{-1} \mathbf{H}_{k}\right)^{-1} \mathbf{H}_{k}^{\top} l_{\mathbf{R}_{k}} \mathbf{R}_{k}^{-1}\\
 & = \left(\left(\frac{\hat{\mathbf{C}}_{k}}{l_{\hat{\mathbf{C}}_{k}}}\right)^{-1}+\mathbf{H}_{k}^{\top}  \left(\frac{\mathbf{R}_{k}}{l_{\mathbf{R}_{k}}}\right)^{-1} \mathbf{H}_{k}\right)^{-1} \mathbf{H}_{k}^{\top} \left(\frac{\mathbf{R}_{k}}{l_{\mathbf{R}_{k}}}\right)^{-1}.
\end{aligned}
\end{equation}
Here we shall discuss the practical algorithm, since $l_{\mathbf{R}_{k}}$ and $l_{\hat{\mathbf{C}}_{k}}$ all contain the item $\mathbf{x}_{k}$.
\begin{comment}
the optimal solution \eqref{eqn:opt-form} is in fact the zero point of the following equation:
\begin{equation}\label{eqn:mc-enkf-zero-point}
    \mathbf{x}_{k} - \mathcal{F}(\mathbf{x}_{k}) = 0,
\end{equation}
with $\mathcal{F}$ is defined as follows:
\begin{equation}
\begin{aligned}
        \mathcal{F}(\mathbf{x}_{k}) & = \mathbf{x}_{k\mid k-1}^{(i)}+\left(l_{\hat{\mathbf{C}}_{k}} \hat{\mathbf{C}}_{k}^{-1}+\mathbf{H}_{k}^{\top} l_{\mathbf{R}_{k}} \mathbf{R}_{k}^{-1} \mathbf{H}_{k}\right)^{-1} \\
& \times \mathbf{H}_{k}^{\top} l_{\mathbf{R}_{k}} \mathbf{R}_{k}^{-1}\left(\mathbf{y}_{k}-\mathbf{h}_{k}\left(\mathbf{x}_{k\mid k-1}^{(i)}\right)\right).
\end{aligned}
\end{equation}
A typical way of dealing with \eqref{eqn:mc-enkf-zero-point} is to view it as a fixed-point equation of $\mathbf{x}_{k}$, whose solution can be obtained by a fixed-point iterative algorithm. such a perspective has been investigated in \cite{chen2017maximum}. 
\end{comment}
Here we introduce a novel technical approximation. 
Recall \eqref{eqn:two-weights}, we use $\hat{\mathbf{m}}_{k}$ to approximate $\mathbf{x}_{k}$ contained in $l_{\mathbf{R}_{k}}$ and $l_{\hat{\mathbf{C}}_{k}}$, respectively, i.e., 
\begin{equation}\label{eqn:approx-two-weights}
    \begin{aligned}
        \hat{l}_{\mathbf{R}_{k}} & =  \mathcal{G}_{\sigma}\left(\left\|\mathbf{y}_{k}-\mathbf{h}_{k}\left(\hat{\mathbf{m}}_{k}\right)\right\|_{\mathbf{R}_{k}}\right) \approx l_{\mathbf{R}_{k}}\\
        \hat{l}_{\hat{\mathbf{C}}_{k}} & = \mathcal{G}_{\sigma}\left(\| \hat{\mathbf{m}}_{k} -\hat{\mathbf{m}}_{k} \|_{\hat{\mathbf{C}}_{k}}\right) \approx l_{\hat{\mathbf{C}}_{k}}.
    \end{aligned}
\end{equation}
It follows that we summarise the specific algorithm steps of MC-EnKF in Algorithm \ref{alg:mc-enkf}.
\begin{algorithm}[htbp]
\caption{MC-EnKF}
\label{alg:mc-enkf}
\begin{algorithmic}[1]
    \STATE \textbf{Intitialization.} Start with initial filtering ensembles $\mathbf{x}_{0\mid 0}^{(1)}, \dots, \mathbf{x}_{0\mid 0}^{(N)}$. Then, at each time $k=1, 2, \dots$, given filtering ensembles $\mathbf{x}_{k-1\mid k-1}^{(1)}, \dots, \mathbf{x}_{k-1\mid k-1}^{(N)}$, the MC-EnKF carries out the following two steps for $i=1, 2, \dots, N$:
	\STATE \textbf{Prediction Step: } Draw $\mathbf{w}_{k}^{(i)}\sim\mathcal{N}(0, \mathbf{Q}_{k})$ and calculate $\mathbf{x}_{k\mid k-1}^{(i)}$ via \eqref{eqn:enkf-pred}. \;
	\STATE  \textbf{Update Step: }  Draw $\mathbf{v}_{k}^{(i)}\sim\mathcal{N}(0, \mathbf{R}_{k})$ and calculate $\mathbf{x}_{k\mid k}^{(i)}$ by
    $$
    \mathbf{x}_{k\mid k}^{(i)} = \mathbf{x}_{k\mid k-1}^{(i)} + \tilde{\mathbf{K}}_{k} \left(\mathbf{y}_{k} + \mathbf{v}_{k}^{(i)} - \mathbf{h}_{k}\left(\mathbf{x}_{k\mid k-1}^{(i)}\right)\right),
    $$
    where $\tilde{\mathbf{K}}_{k}$ is defined in \eqref{eqn:enkf-gain-} with the approximation \eqref{eqn:approx-two-weights}. \;
\end{algorithmic}
\end{algorithm}

\begin{remark}
The proposed MC-EnKF is still of Kalman type, i.e., it  has a similar recursive
structure as the common EnKF. Moreover, its computational complexity is only slightly higher than the common EnKF, which is verified in later experiments.
\end{remark}

\begin{remark}
In \eqref{eqn:enkf-gain-}, we note that MC-EnKF is potential to handle non-Gaussian observation noises thanks to the two weights, $\hat{l}_{\hat{\mathbf{C}}_{k}}$ and $\hat{l}_{\mathbf{R}_{k}}$, which could adjust $\hat{\mathbf{C}}_{k}$ and $\mathbf{R}_{k}$ adaptively. We shall also verify this argument in later experiments.
\end{remark}

\subsection{Adaptive Strategy for Kernel Bandwidth Selection and Convergence Regarding to Kernel Bandwidth}
In general, the kernel bandwidth (scale parameter) in the cost function balances the convergence rates of the regression model and its robustness \cite{feng2015learning}. Since our algorithm does not involve solving such regression problems, we focus on the how to tune this scale parameter for robustness (with respect to outliers). Intuitively, if the arrived observation $\mathbf{y}_{k}$ contains large outliers, $\|\mathbf{y}_{k} - \mathbf{h}_{k}(\hat{\mathbf{x}}_{k})\|_{2}$ will be large, where $\|\cdot\|_{2}$ denote the $l^{2}$-norm. In this case, we need a smaller $\sigma$ to make our algorithm more robust. Motivated by this intuition, we propose to set $\sigma_{k} = \frac{1}{\|\mathbf{y}_{k} - \mathbf{h}_{k}(\hat{\mathbf{x}}_{k})\|_{2}}$ adaptively, where $\sigma_{k}$ denote the kernel bandwidth used in the $k$-th step for MC-EnKF. As shown in our simulation experiments, this adaptive strategy can achieve a good estimation performance. Additionally, the MC-EnKF will act more and more like the common EnKF algorithm as $\sigma$ increases. In particular, the following convergence theorem holds .

\begin{theorem}\label{th:main}
If the kernel bandwidth $\sigma\to\infty$, the proposed MC-EnKF will converge to the common EnKF.
\end{theorem}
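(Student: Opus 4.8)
The plan is to show that, in the limit $\sigma\to\infty$, the MC-EnKF Kalman gain $\tilde{\mathbf{K}}_{k}$ in \eqref{eqn:enkf-gain-} reduces exactly to the common EnKF gain $\hat{\mathbf{K}}_{k}$ in \eqref{eqn:enkf-gain}; since the update rules \eqref{eqn:gain-update} and \eqref{eqn:enkf-update} share the same structure and use the same innovation $\mathbf{y}_{k}+\mathbf{v}_{k}^{(i)}-\mathbf{h}_{k}(\mathbf{x}_{k\mid k-1}^{(i)})$, convergence of the gains immediately yields convergence of the filters.

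First I would examine the two scalar weights appearing in \eqref{eqn:enkf-gain-} under the approximation \eqref{eqn:approx-two-weights}. The key observation is that
\begin{equation}
\hat{l}_{\hat{\mathbf{C}}_{k}}=\mathcal{G}_{\sigma}\left(\|\hat{\mathbf{m}}_{k}-\hat{\mathbf{m}}_{k}\|_{\hat{\mathbf{C}}_{k}}\right)=\mathcal{G}_{\sigma}(0)=1
\end{equation}
holds identically, for every value of $\sigma$, because the argument of the Gaussian kernel vanishes. For the remaining weight, $\hat{l}_{\mathbf{R}_{k}}=\exp\!\left(-\|\mathbf{y}_{k}-\mathbf{h}_{k}(\hat{\mathbf{m}}_{k})\|_{\mathbf{R}_{k}}^{2}/(2\sigma^{2})\right)$, the numerator is a fixed finite quantity, so the exponent tends to $0$ and hence $\hat{l}_{\mathbf{R}_{k}}\to 1$ as $\sigma\to\infty$.

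Next I would substitute these limiting values $\hat{l}_{\hat{\mathbf{C}}_{k}}=1$ and $\hat{l}_{\mathbf{R}_{k}}\to 1$ into \eqref{eqn:enkf-gain-}, obtaining
\begin{equation}
\tilde{\mathbf{K}}_{k}\ \longrightarrow\ \left(\hat{\mathbf{C}}_{k}^{-1}+\mathbf{H}_{k}^{\top}\mathbf{R}_{k}^{-1}\mathbf{H}_{k}\right)^{-1}\mathbf{H}_{k}^{\top}\mathbf{R}_{k}^{-1}.
\end{equation}
The final step is to verify that this information-form expression coincides with $\hat{\mathbf{K}}_{k}=\hat{\mathbf{C}}_{k}\mathbf{H}_{k}^{\top}(\mathbf{H}_{k}\hat{\mathbf{C}}_{k}\mathbf{H}_{k}^{\top}+\mathbf{R}_{k})^{-1}$. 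I would establish this via the Woodbury matrix identity, i.e. the standard equivalence between the covariance form and the information form of the Kalman gain, which requires only that $\hat{\mathbf{C}}_{k}$ and $\mathbf{R}_{k}$ be invertible.

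I do not anticipate a serious obstacle; the argument is essentially continuity of $\mathcal{G}_{\sigma}$ in $\sigma$ combined with a routine matrix-algebra identity. The one point deserving care is that $\hat{l}_{\hat{\mathbf{C}}_{k}}$ is \emph{identically} unity under \eqref{eqn:approx-two-weights}, so the entire convergence is driven by $\hat{l}_{\mathbf{R}_{k}}\to 1$, and the ratio structure $\hat{\mathbf{C}}_{k}/l_{\hat{\mathbf{C}}_{k}}$ and $\mathbf{R}_{k}/l_{\mathbf{R}_{k}}$ in \eqref{eqn:enkf-gain-} must be checked to collapse cleanly to $\hat{\mathbf{C}}_{k}$ and $\mathbf{R}_{k}$. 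I would also remark that even without invoking the approximation \eqref{eqn:approx-two-weights}, the exact weights $l_{\mathbf{R}_{k}},l_{\hat{\mathbf{C}}_{k}}$ (which still depend on $\mathbf{x}_{k}$) both tend to $1$ uniformly on any bounded set as $\sigma\to\infty$, so the conclusion is unaffected.
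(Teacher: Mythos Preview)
Your proposal is correct and follows essentially the same approach as the paper: show the scalar weights tend to $1$, reduce $\tilde{\mathbf{K}}_{k}$ to the information-form gain, and then invoke the Woodbury (matrix inversion) identity to recover $\hat{\mathbf{K}}_{k}$. The only minor differences are that the paper carries out the Woodbury manipulation explicitly in two steps rather than citing it, and you additionally observe (correctly, and more sharply than the paper) that $\hat{l}_{\hat{\mathbf{C}}_{k}}\equiv 1$ identically under \eqref{eqn:approx-two-weights}.
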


\begin{proof}
See Appendix \ref{appendix:proof}.
\end{proof}

\begin{remark}
This justification in Theorem \ref{th:main} gives the flexibility of MC-EnKF to handle Gaussian noise cases since the common EnKF performs well in this case.
\end{remark}

\section{Experiments}\label{sec:experiments}

In this section, we will conduct performance comparisons of our proposed MC-EnKF with the common EnKF and the maximum liklihood EnKF (ML-EnKF) \cite{zupanski2005maximum}, which optimizes a nonlinear cost function through maximum likelihood. These evaluations will be carried out on various filtering benchmarks that incorporate non-Gaussian noises, i.e., the observation noises with large outliers. In these experiments, we consider $M=100$ independent Monte Carlo runs. In each run, $N=1000$ samples are used to evaluate the MSE of the state. The EnKF and MC-EnKF are implemented by Numpy \cite{harris2020array} and run on Intel(R) Core(TM) i7-10700 CPU @ 2.90GHz.

\subsection{Linear System}
The first benchmark we used here is a linear system, which is given by
\begin{equation}
\begin{aligned}
    \mathbf{x}_{k} & = \left[\begin{array}{cc}
        \cos(\alpha) & \sin(\alpha)  \\
        -\sin(\alpha) & \cos(\alpha)
    \end{array}\right] \mathbf{x}_{k-1}  + \mathbf{w}_{k}, \\
    \mathbf{y}_{k} & =  \left[\begin{array}{cc}
        1 & 1 
    \end{array}\right]\mathbf{x}_{k} + \mathbf{v}_{k},
\end{aligned}
\end{equation}
where $\alpha=\frac{\pi}{18}$, $\mathbf{x}_{0}\sim\mathcal{N}(0, \mathbf{I}_{2})$ and $\mathbf{w}_{k}\sim \mathcal{N}(0, 0.01\mathbf{I}_{2})$. Let $\mathbf{Q}_{1}=0.01 \mathbf{I}_{1}$ be the nominal observation covariance, the observation noises are sampled from the mixture of Gaussian, i.e.,
\begin{equation}
    \mathbf{v}_{k} \sim 0.9 \ \mathcal{N}(0, \mathbf{Q}_{1}) + 0.1 \ \mathcal{N}(0, 100\mathbf{Q}_{1}).
\end{equation}
Table \ref{table:linear} lists the MSEs and average CPU times of EnKF and MC-EnKF
with different kernel bandwidths in this example, where the number of ensembles of EnKF and MC-EnKF are set as $100$. Note that the MC-EnKF outperforms EnKF in most cases and their average CPU times are virtually identical to the common EnKF. This demonstrates that our proposed MC-EnKF has better performance than those of EnKF with nearly no extra cost on CPU time. It means that MC-EnKF is more robust and efficient when we deal with observation noises with large outliers.
In fact, our proposed adaptive strategy (MC-EnKF-Ada) outperforms most different $\sigma$ cases and is only slightly worse than the best case. This suggests that our adaptive strategy works well for linear system.
When we conduct the simulation for large enough $\sigma$, for example, $\sigma=10000$, the result verifies Theorem \ref{th:main} for linear system since we note that the performance of MC-EnKF is almost identical to the EnKF in this case.
\begin{table}[htbp]
\caption{The MSEs comparisons between EnKF and MC-EnkF with different $\sigma$ for linear system.}
\begin{center}
\begin{tabular}{ccc}
\toprule
Methods & MSE & CPU Time \\
\midrule
EnKF & 3.1004 & 0.4676 \\
MC-EnKF-Ada & 2.2344 & 0.4697 \\
MC-EnKF ($\sigma=0.1$) & 2.9655 & 0.4634 \\
MC-EnKF ($\sigma=0.5$) & 2.9247 & 0.4636 \\
MC-EnKF ($\sigma=2$) & 2.1031 & 0.4689\\
MC-EnKF ($\sigma=5$) & \textbf{1.9411} & 0.4629 \\
MC-EnKF ($\sigma=10$) & 2.3073 & 0.4642\\
MC-EnKF ($\sigma=10000$) & 3.1605 & 0.4725\\
\bottomrule
\end{tabular}
\end{center}
\label{table:linear}
\end{table}

\begin{comment}
\begin{table}
\caption{An Example of a Table}
\label{table_example}
\begin{center}
\begin{tabular}{|c||c|}
\hline
One & Two\\
\hline
Three & Four\\
\hline
\end{tabular}
\end{center}
\end{table}
\end{comment}

In order to better present the comparison results, we choose the case where $\sigma=5$ to illustrate the true state and EnKF estimation and MC-EnKF estimation over time, which is shown in Fig \ref{fig:linear}. Here, the selected dimension is indicated by the - suffix of the legend in the figures; for example, EnKF-1 denotes the EnKF estimation on the first state dimension. This setting will also be used for the following figure. We can clearly see that this MC-EnKF gives accurate estimates but EnKF performs poorly when the observations contain large outliers. 

\begin{figure}[htbp]
    \centering
    \includegraphics[width=1\linewidth]{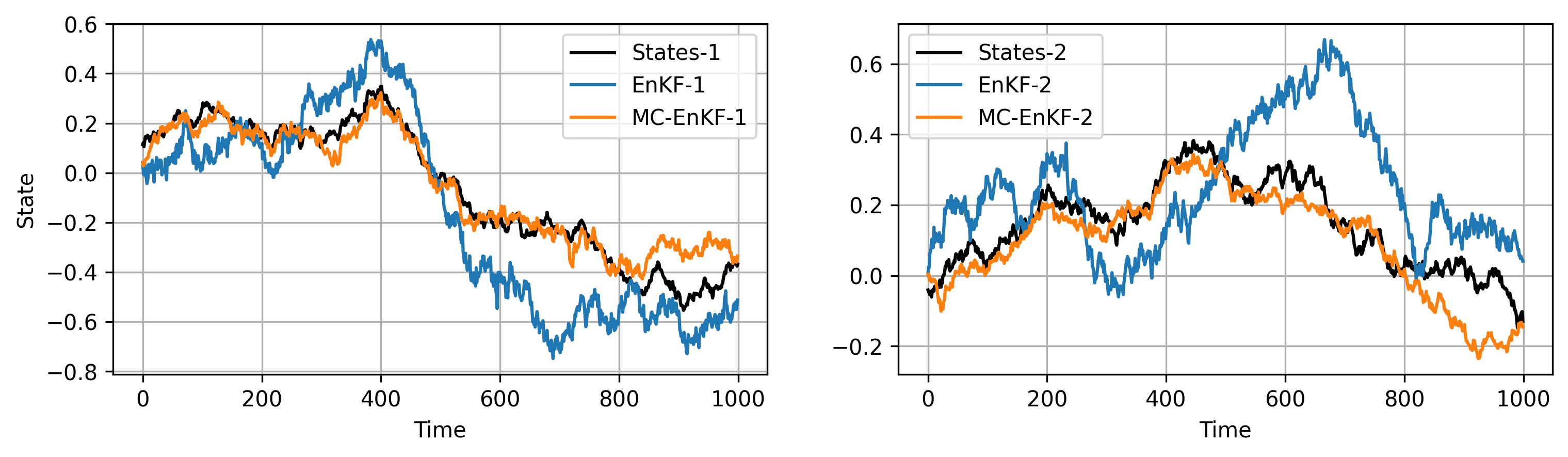}
    \caption{The true state versus the estimate of EnKF and the estimate of MC-EnKF ($\sigma=5$) over time for linear system.}
    \label{fig:linear}
\end{figure}

\subsection{Nonlinear System}
The second benchmark is a nonlinear system, which is given as follows:
\begin{equation}\label{eqn:nonlinear}
    \begin{aligned}
    \mathbf{x}_{k} & =  \left(\mathbf{I}_{2} + \kappa_{1} \left[
\begin{array}{cc}
    -1  &  0.2 \\
    0.2 & -1 
\end{array}
\right] \right) \mathbf{x}_{k-1} + \kappa_{2} \cos(\mathbf{x}_{k-1})  + \mathbf{w}_{k} ,\\
    \mathbf{y}_{k} & = \mathbf{x}_{k} + \sin(\mathbf{x}_{k}) + \mathbf{v}_{k},
    \end{aligned}
\end{equation}
where $\mathbf{x}_{0}\sim\mathcal{N}(0, \mathbf{I}_{2})$ and $\mathbf{w}_{k}\sim \mathcal{N}(0, \mathbf{I}_{2})$. $\kappa_{1} =\kappa_{2}=0.1$ are constants controlling the state dynamics. Let $\mathbf{Q}_{2}=\mathbf{I}_{2}$ be the nominal observation covariance, the observation noises  are sampled from the mixture of Gaussian, i.e.,
\begin{equation}
    \mathbf{v}_{k} \sim 0.9 \ \mathcal{N}(0, \mathbf{Q}_{2}) + 0.1 \ \mathcal{N}(0, 1000\mathbf{Q}_{2}).
\end{equation}
We list the MSEs and average CPU times of EnKF, ML-EnKF and MC-EnKF with different kernel bandwidths in Table \ref{table:nonlinear} for this nonlinear example, where the number of ensembles of these filters still be set as $100$.  It is obvious that for nonlinear system the EnKF degrades greatly in the presence of non-Gaussian observation noises and the linear approximation error of observation function, hence has the worst estimate performance. ML-EnKF performs better than EnKF, but is still affected by outliers. Better results can be obtained with our proposed adaptive strategy (MC-EnKF-Ada) than with some kernel bandwidths, indicating that it is still useful for nonlinear systems. Note that the MC-EnKF outperforms EnKF in all cases, which demonstrate that our proposed MC-EnKF can effectively eliminate the effect of the non-Gaussian observation noises. Additionally, we notice that they share almost identical CPU times, which demonstrates the efficiency of MC-EnKF. Moreover, with suitable kernel bandwidth ($\sigma=5$), MC-EnKF can achieve very accurate estimates while EnKF is heavily affected by large outliers, which is presented
in Fig \ref{fig:nonlinear}. We also note that when $\sigma$ is large enough ($\sigma=10000$), the performance of MC-EnKF is identical to the EnKF, which verifies the result of Theorem \ref{th:main} for nonlinear system.

\begin{table}[htbp]
\caption{The MSEs comparisons between EnKF and MC-EnkF with different $\sigma$ for nonlinear system.}
\begin{center}
\begin{tabular}{ccc}
\toprule
Methods & MSE & CPU Time\\
\midrule
EnKF & 4.0929 & 0.4768 \\
ML-EnKF & 3.3567 & 0.5059 \\
MC-EnKF-Ada & 2.9282 & 0.4775 \\
MC-EnKF ($\sigma=0.1$) & 3.0150 & 0.4793 \\
MC-EnKF ($\sigma=0.5$) & 2.9703 & 0.4769 \\
MC-EnKF ($\sigma=2$) & 1.5849 & 0.4713\\
MC-EnKF ($\sigma=5$) & \textbf{1.3012} & 0.4762 \\
MC-EnKF ($\sigma=10$) & 1.3752 & 0.4729 \\
MC-EnKF ($\sigma=10000$) & 4.0448 & 0.4799\\
\bottomrule
\end{tabular}
\end{center}
\label{table:nonlinear}
\end{table}

\begin{figure}[htbp]
    \centering
    \includegraphics[width=1\linewidth]{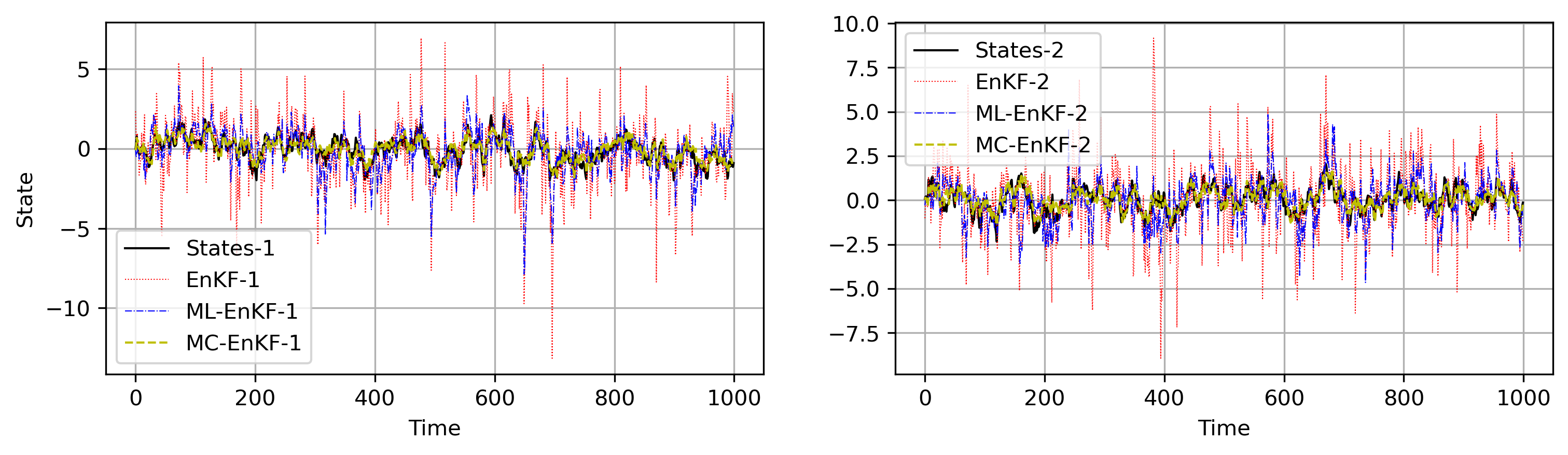}
    \caption{The true state versus the estimate of EnKF and the estimate of MC-EnKF ($\sigma=5$) over time for nonlinear system.}
    \label{fig:nonlinear}
\end{figure}

\subsection{Discussions}
According to the simulation experiments mentioned above, we notice the following two arguments:
\begin{itemize}
    \item With suitable kernel bandwidth, MC-EnKF can significantly improve its robustness to observation noises containing large outliers with nearly no CPU time loss.
    \item With large kernel bandwidth, MC-EnKF performs like the common EnKF, which implies its flexibility to handle Gaussian noise cases.
\end{itemize}
These two arguments make our viewpoint a very fruitful area for further study. Besides,
we note that the kernel bandwidth $\sigma$ plays a significant role in the algorithm and will have an impact on its robustness to non-Gaussian observation (such as outliers). Although our proposed adaptive strategy is effective, it cannot achieve the performance with suitable kernel bandwidth. Therefore, a highly intriguing future work direction may be a more effective adaptive strategy to pick the right kernel bandwidth with theoretical guarantee since it is crucial for our proposed MC-EnKF. In Addition,
this article only investigates using the special case (MCC) of generalized MCC \cite{chen2016generalized} as a cost function to derive a robust EnKF with stochastic update steps. Hence how to develop robust variants of EnKF \cite{evensen2003ensemble} based on the generalized MCC will be our next research focus.

\section{Conclusion}\label{sec:conclusion}

This article proposes a robust EnKF filtering algorithm called maximal correntropy ensemble Kalman filter (MC-EnKF), which is flexible to handle cases involving both Gaussian and non-Gaussian noise. Instead of the well-known minimum mean square error (MMSE) criterion, the MC-EnKF is derived by employing the maximum correntropy criterion (MCC) as the optimality criterion. Ensemble propagation equations remain to be of the Kalman type. The MC-EnKF will behave like the EnKF when the kernel bandwidth is large enough, and we also theoretically demonstrate this argument. With the proper kernel bandwidth, MC-EnKF can perform much better than the EnKF at only a slight increase in computational cost, especially when the underlying observation system is disturbed by some heavy-tailed non-Gaussian noises. Besides, we propose an adaptive strategy to help us choose kernel bandwidth, and its effectiveness is also been verified by experiments.

% if have a single appendix:
%\appendix[Proof of the Zonklar Equations]
% or
%\appendix  % for no appendix heading
% do not use \section anymore after \appendix, only \section*
% is possibly needed

% use appendices with more than one appendix
% then use \section to start each appendix
% you must declare a \section before using any
% \subsection or using \label (\appendices by itself
% starts a section numbered zero.)
%

\appendices
\section{Proof of the Theorem \ref{th:main}}\label{appendix:proof}

Here we shall give the proof of the Theorem \ref{th:main}. Before proceeding, we need the following technical lemma.
\begin{lemma}[Matrix Inversion
Lemma \cite{hager1989updating}]\label{lemma:mat-inv-lemma}
If $\mathbf{A}\in\mathbb{R}^{n\times n}$, $\mathbf{C}\in\mathbb{R}^{n\times n}$ are non-singular, $\mathbf{B}\in\mathbb{R}^{n\times m}$, $\mathbf{D}\in\mathbb{R}^{m\times n}$,
\begin{equation}
    \left(\mathbf{A} + \mathbf{B}\mathbf{C}\mathbf{D}\right)^{-1} = \mathbf{A}^{-1} - \mathbf{A}^{-1}\mathbf{B}\left(\mathbf{D}\mathbf{A}^{-1}\mathbf{B} + \mathbf{C}^{-1}\right)^{-1}\mathbf{D}\mathbf{A}^{-1}
\end{equation}
\end{lemma}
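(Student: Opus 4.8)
The plan is to track the two scalar weights $\hat{l}_{\mathbf{R}_{k}}$ and $\hat{l}_{\hat{\mathbf{C}}_{k}}$ from the approximation \eqref{eqn:approx-two-weights} as $\sigma\to\infty$, substitute their limiting values into the MC-EnKF gain \eqref{eqn:enkf-gain-}, and then invoke the Matrix Inversion Lemma (Lemma \ref{lemma:mat-inv-lemma}) to identify the resulting gain with the ordinary EnKF gain \eqref{eqn:enkf-gain}. Since the prediction step \eqref{eqn:enkf-pred} is shared verbatim by both filters and the two update rules \eqref{eqn:gain-update} and \eqref{eqn:enkf-update} are structurally identical once their gains coincide, convergence of the gain will immediately propagate to convergence of the whole recursion.

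First I would evaluate the two weights for a fixed realization of the ensemble (hence a fixed $\hat{\mathbf{m}}_{k}$) and a fixed observation $\mathbf{y}_{k}$. Because the Gaussian kernel is $\mathcal{G}_\sigma(e)=\exp(-e^{2}/(2\sigma^{2}))$, the weight $\hat{l}_{\hat{\mathbf{C}}_{k}}$ is in fact independent of $\sigma$: its argument satisfies $\|\hat{\mathbf{m}}_{k}-\hat{\mathbf{m}}_{k}\|_{\hat{\mathbf{C}}_{k}}=0$, so $\hat{l}_{\hat{\mathbf{C}}_{k}}=\mathcal{G}_\sigma(0)=1$ for every $\sigma$. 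For the remaining weight, the quantity $\|\mathbf{y}_{k}-\mathbf{h}_{k}(\hat{\mathbf{m}}_{k})\|_{\mathbf{R}_{k}}$ is a finite scalar once the data are fixed, whence $\hat{l}_{\mathbf{R}_{k}}=\exp\!\left(-\|\mathbf{y}_{k}-\mathbf{h}_{k}(\hat{\mathbf{m}}_{k})\|_{\mathbf{R}_{k}}^{2}/(2\sigma^{2})\right)\to 1$ as $\sigma\to\infty$. Substituting $\hat{l}_{\hat{\mathbf{C}}_{k}}=1$ and $\hat{l}_{\mathbf{R}_{k}}\to 1$ into \eqref{eqn:enkf-gain-} yields the limiting gain
\[
\tilde{\mathbf{K}}_{k}\longrightarrow\left(\hat{\mathbf{C}}_{k}^{-1}+\mathbf{H}_{k}^{\top}\mathbf{R}_{k}^{-1}\mathbf{H}_{k}\right)^{-1}\mathbf{H}_{k}^{\top}\mathbf{R}_{k}^{-1}=:\tilde{\mathbf{K}}_{k}^{\infty}.
\]

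The core algebraic step is to show $\tilde{\mathbf{K}}_{k}^{\infty}=\hat{\mathbf{K}}_{k}$, i.e.\ that the precision-matrix (information) form of the gain equals the covariance form \eqref{eqn:enkf-gain}. I would apply Lemma \ref{lemma:mat-inv-lemma} with $\mathbf{A}=\hat{\mathbf{C}}_{k}^{-1}$, $\mathbf{B}=\mathbf{H}_{k}^{\top}$, $\mathbf{C}=\mathbf{R}_{k}^{-1}$, $\mathbf{D}=\mathbf{H}_{k}$ to rewrite $(\hat{\mathbf{C}}_{k}^{-1}+\mathbf{H}_{k}^{\top}\mathbf{R}_{k}^{-1}\mathbf{H}_{k})^{-1}$ as $\hat{\mathbf{C}}_{k}-\hat{\mathbf{C}}_{k}\mathbf{H}_{k}^{\top}\mathbf{S}_{k}^{-1}\mathbf{H}_{k}\hat{\mathbf{C}}_{k}$, where $\mathbf{S}_{k}:=\mathbf{H}_{k}\hat{\mathbf{C}}_{k}\mathbf{H}_{k}^{\top}+\mathbf{R}_{k}$. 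Multiplying on the right by $\mathbf{H}_{k}^{\top}\mathbf{R}_{k}^{-1}$, factoring out $\hat{\mathbf{C}}_{k}\mathbf{H}_{k}^{\top}\mathbf{S}_{k}^{-1}$, and using the identity $\mathbf{S}_{k}\mathbf{R}_{k}^{-1}-\mathbf{H}_{k}\hat{\mathbf{C}}_{k}\mathbf{H}_{k}^{\top}\mathbf{R}_{k}^{-1}=\mathbf{I}_{m}$ collapses the bracketed factor to the identity, leaving exactly $\hat{\mathbf{C}}_{k}\mathbf{H}_{k}^{\top}\mathbf{S}_{k}^{-1}=\hat{\mathbf{K}}_{k}$. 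The non-singularity hypotheses of the lemma are met since $\hat{\mathbf{C}}_{k}$ and $\mathbf{R}_{k}$ are assumed invertible.

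Finally I would assemble the pieces: the prediction step is literally shared, and the MC-EnKF update \eqref{eqn:gain-update} differs from the EnKF update \eqref{eqn:enkf-update} only through its gain, so with $\tilde{\mathbf{K}}_{k}\to\hat{\mathbf{K}}_{k}$ entrywise and all other quantities held fixed, each $\mathbf{x}_{k\mid k}^{(i)}$ produced by MC-EnKF converges to its EnKF counterpart. I expect the only real obstacle to be the bookkeeping inside the Matrix Inversion Lemma manipulation — in particular verifying the cancellation $\mathbf{S}_{k}\mathbf{R}_{k}^{-1}-\mathbf{H}_{k}\hat{\mathbf{C}}_{k}\mathbf{H}_{k}^{\top}\mathbf{R}_{k}^{-1}=\mathbf{I}_{m}$ and tracking the matrix dimensions; the limiting behaviour of the weights and the final assembly are elementary once this identity is established.
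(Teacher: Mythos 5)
Your proposal is correct and takes essentially the same route as the paper: like the paper, you import the Matrix Inversion Lemma from \cite{hager1989updating} without proving it, and the argument you give is exactly the paper's proof of Theorem \ref{th:main} --- send $\hat{l}_{\mathbf{R}_{k}}$ and $\hat{l}_{\hat{\mathbf{C}}_{k}}$ to $1$ as $\sigma\to\infty$, then use the lemma to convert the information-form gain $\bigl(\hat{\mathbf{C}}_{k}^{-1}+\mathbf{H}_{k}^{\top}\mathbf{R}_{k}^{-1}\mathbf{H}_{k}\bigr)^{-1}\mathbf{H}_{k}^{\top}\mathbf{R}_{k}^{-1}$ into the covariance form \eqref{eqn:enkf-gain}. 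The only deviations are minor tightenings in your favor: you observe that $\hat{l}_{\hat{\mathbf{C}}_{k}}=\mathcal{G}_{\sigma}(0)=1$ exactly for every $\sigma$ (the paper only says it tends to $1$); you use the correct middle factor $\mathbf{C}=\mathbf{R}_{k}^{-1}$ where the paper's stated choice $\mathbf{C}=\mathbf{R}_{k}$ is a typo (its subsequent algebra matches your choice, not its stated one); and you finish with a single application of the lemma plus the cancellation $\mathbf{S}_{k}\mathbf{R}_{k}^{-1}-\mathbf{H}_{k}\hat{\mathbf{C}}_{k}\mathbf{H}_{k}^{\top}\mathbf{R}_{k}^{-1}=\mathbf{I}_{m}$, rather than the paper's second application, while also making explicit the final (trivial but worth stating) step that coincidence of gains propagates to coincidence of the full ensemble recursions.
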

\begin{proof}
Now we state the proof of Theorem \ref{th:main}. Note that when the kernel bandwidth $\sigma\to\infty$, 
\begin{equation}
\begin{aligned}
   \hat{l}_{\mathbf{R}_{k}} & =\mathcal{G}_{\sigma}\left(\left\|\mathbf{y}_{k}-\mathbf{h}_{k}\left(\hat{\mathbf{m}}_{k}\right)\right\|_{\mathbf{R}_{k}}\right) \\
   & = \exp\left( \frac{-\left(\mathbf{y}_{k}-\mathbf{h}_{k}\left(\hat{\mathbf{m}}_{k}\right)\right)^{\top}\mathbf{R}_{k}^{-1}\left(\mathbf{y}_{k}-\mathbf{h}_{k}\left(\hat{\mathbf{m}}_{k}\right)\right)}{2\sigma^{2} }\right) \\
   & \to 1,
\end{aligned}
\end{equation}
Similarly, one can conclude that $\hat{l}_{\hat{\mathbf{C}}_{k}}\to 1$ when $\sigma\to\infty$. These arguments imply $\tilde{\mathbf{K}}_{k}$ in \eqref{eqn:enkf-gain-} with the approximation \eqref{eqn:approx-two-weights} become
\begin{equation}
\tilde{\mathbf{K}}_{k}=\left(\hat{\mathbf{C}}_{k}^{-1}+\mathbf{H}_{k}^{\top}  \mathbf{R}_{k}^{-1} \mathbf{H}_{k}\right)^{-1} \mathbf{H}_{k}^{\top}\mathbf{R}_{k}^{-1},
\end{equation}
Then in view of Lemma \ref{lemma:mat-inv-lemma} (we choose $\mathbf{A}=\hat{\mathbf{C}}_{k}^{-1}$, $\mathbf{B}=\mathbf{H}_{k}^{\top} $, $\mathbf{C}=\mathbf{R}_{k}$ and $\mathbf{D}=\mathbf{H}_{k}$), one can conclude that
\begin{equation}
\begin{aligned}
    \tilde{\mathbf{K}}_{k} & =\left(\hat{\mathbf{C}}_{k}^{-1}+\mathbf{H}_{k}^{\top}  \mathbf{R}_{k}^{-1} \mathbf{H}_{k}\right)^{-1} \mathbf{H}_{k}^{\top}\mathbf{R}_{k}^{-1} \\
    & = \left(\hat{\mathbf{C}}_{k} - \hat{\mathbf{C}}_{k}\mathbf{H}_{k}^{\top}\left(\mathbf{R}_{k} + \mathbf{H}_{k}\hat{\mathbf{C}}_{k}\mathbf{H}_{k}^{\top}\right)^{-1}\mathbf{H}_{k}\hat{\mathbf{C}}_{k} \right)\mathbf{H}_{k}^{\top}\mathbf{R}_{k}^{-1} \\
    & = \hat{\mathbf{C}}_{k}\mathbf{H}_{k}^{\top}\mathbf{R}_{k}^{-1} - \hat{\mathbf{C}}_{k}\mathbf{H}_{k}^{\top}\left(\mathbf{R}_{k} + \mathbf{H}_{k}\hat{\mathbf{C}}_{k}\mathbf{H}_{k}^{\top}\right)^{-1} \\
    & \times \mathbf{H}_{k}\hat{\mathbf{C}}_{k}\mathbf{H}_{k}^{\top}\mathbf{R}_{k}^{-1} \\
    & = \hat{\mathbf{C}}_{k}\mathbf{H}_{k}^{\top}\left(\mathbf{R}_{k}^{-1}- \left(\mathbf{R}_{k} + \mathbf{H}_{k}\hat{\mathbf{C}}_{k}\mathbf{H}_{k}^{\top}\right)^{-1}\mathbf{H}_{k}\hat{\mathbf{C}}_{k}\mathbf{H}_{k}^{\top}\mathbf{R}_{k}^{-1}\right) \\
    &  = \hat{\mathbf{C}}_{k}\mathbf{H}_{k}^{\top}\left(\mathbf{I}- \left(\mathbf{R}_{k} + \mathbf{H}_{k}\hat{\mathbf{C}}_{k}\mathbf{H}_{k}^{\top}\right)^{-1}\mathbf{H}_{k}\hat{\mathbf{C}}_{k}\mathbf{H}_{k}^{\top}\right)\mathbf{R}_{k}^{-1}.
\end{aligned}
\end{equation}
Now use the Lemma \ref{lemma:mat-inv-lemma} again (we choose $\mathbf{A}=\mathbf{I}$, $\mathbf{B}=\mathbf{I} $, $\mathbf{C}=\mathbf{R}_{k}$ and $\mathbf{D}=\mathbf{H}_{k}\hat{\mathbf{C}}_{k}\mathbf{H}_{k}^{\top}$), one can further conclude that
\begin{equation}
   \begin{aligned}
    \tilde{\mathbf{K}}_{k} & =\hat{\mathbf{C}}_{k}\mathbf{H}_{k}^{\top}\left(\mathbf{I}- \left(\mathbf{R}_{k} + \mathbf{H}_{k}\hat{\mathbf{C}}_{k}\mathbf{H}_{k}^{\top}\right)^{-1}\mathbf{H}_{k}\hat{\mathbf{C}}_{k}\mathbf{H}_{k}^{\top}\right)\mathbf{R}_{k}^{-1} \\
    & = \hat{\mathbf{C}}_{k}\mathbf{H}_{k}^{\top}\left(\mathbf{I} + \mathbf{R}_{k}^{-1}\mathbf{H}_{k}\hat{\mathbf{C}}_{k}\mathbf{H}_{k}^{\top} \right)^{-1}\mathbf{R}_{k}^{-1} \\
    & = \hat{\mathbf{C}}_{k}\mathbf{H}_{k}^{\top}\left(\mathbf{R}_{k} +\mathbf{H}_{k}\hat{\mathbf{C}}_{k}\mathbf{H}_{k}^{\top} \right)^{-1},
    \end{aligned}
\end{equation}
which is equal to \eqref{eqn:enkf-gain}.
\end{proof}

%%%%%%%%%%%%%%%%%%%%%%%%%%%%%%%%%%%%%%%%%%%%%%%%%%%%%%%%%%%%%%%%%%%%%%%%%%%%%%%%

\bibliographystyle{IEEEtran}
\bibliography{ref}

\end{document}